\newtheorem{lemma}{\quad \textit{Lemma}}
\def\BibTeX{{\rm B\kern-.05em{\sc i\kern-.025em b}\kern-.08em
    T\kern-.1667em\lower.7ex\hbox{E}\kern-.125emX}}
\begin{document}

\title{Generative AI Empowered Semantic Feature Multiple Access (SFMA) Over Wireless Networks }

\author{Jiaxiang Wang, Yinchao Yang,
Zhaohui Yang, Chongwen Huang, Mingzhe Chen, Zhaoyang Zhang, 
and Mohammad Shikh-Bahaei,
\IEEEmembership{Senior Member, IEEE}
\thanks{Jiaxiang Wang, Yinchao Yang, and Mohammad Shikh-Bahaei are with the Department of Engineering, King's College London, London, UK. (emails: jiaxiang.wang@kcl.ac.uk;yinchao.yang@kcl.ac.uk; m.sbahaei@kcl.ac.uk)}
\thanks{Zhaohui Yang, Chongwen Huang, and Zhangyang Zhang are with the College of Information Science and Electronic Engineering, Zhejiang University, Hangzhou, Zhejiang 310027, China, and Zhejiang Provincial Key Lab of Information Processing, Communication and Networking (IPCAN), Hangzhou, Zhejiang, 310007, China. (email: yang\_zhaohui@zju.edu.cn; chongwenhuang@zju.edu.cn; ning\_ming@zju.edu.cn) }
\thanks{Mingzhe Chen is with the Department of Electrical and Computer Engineering and Institute for Data Science and Computing, University of Miami, Coral Gables, FL, 33146, USA (e-mail: mingzhe.chen@miami.edu).
}
}

\maketitle
\begin{abstract}
This paper investigates a novel generative artificial intelligence (GAI) empowered multi-user semantic communication system called semantic feature multiple access (SFMA) for video transmission, which comprises a base station (BS) and paired users. The BS generates and combines semantic information of several frames simultaneously requested by paired users into a signal. Users recover their frames from this combined signal and input the recovered frames into a GAI-based video frame interpolation model to generate the intermediate frame.
To optimize transmission rates and temporal gaps between simultaneously transmitted frames, we formulate an optimization problem to maximize the system sum rate while minimizing temporal gaps. We observe that the standard signal-to-interference-plus-noise ratio (SINR) equation does not accurately capture the performance of our semantic communication system. Therefore, we introduce a weight parameter into the SINR equation to better represent the system's performance. Due to the complexity introduced by the weight parameter's dependence on transmit power, we propose a three-step solution.
First, we develop a user pairing algorithm that pairs two users with the highest preference value, a weighted combination of semantic transmission rate and temporal gap. Second, we optimize inter-group power allocation by formulating an optimization problem that allocates proper transmit power across all user groups to maximize system sum rates while satisfying each user's minimum rate requirement. Third, we address intra-group power allocation to enhance the performance of each user.
Simulation results demonstrate that our method improves transmission rates by up to $24.8\%$, $45.8\%$, and $66.1\%$ compared to fixed-power non-orthogonal multiple access (F-NOMA), orthogonal joint source-channel Coding (O-JSCC), and orthogonal frequency division multiple access (OFDMA) schemes, respectively.

\end{abstract}
\begin{IEEEkeywords}
Semantic communication, generative AI, multiple access
\end{IEEEkeywords}

\section{Introduction}
With the development of edge devices such as computing hardware, human intelligence-based wireless applications (e.g., smartphones, virtual reality glasses, robots, etc.) have emerged. However, current communication systems may not be able to support such emerging applications \cite{ref56, ref66}. Semantic communications that enable devices to exploit the knowledge of the transmitter and receivers to extract and transmit data meaning instead of the entire source data seems a promising solution. However, realizing semantic communications faces several challenges such as efficient semantic information extraction, robustness to errors, interoperability across devices, and managing semantic interference in multi-user environments.

Several existing works \cite{ref62, ref63, ref55, ref22, ref23, ref24, ref34, ref35, ref64, ref40, ref65} have studied the design of semantic communication for data transmission across different modalities.
In particular, the authors in \cite{ref62} proposed an end-to-end semantic communication system that leverages the transformer to enhance the learning ability of the meaning of sentences. The work in \cite{ref63} extended this work to task-oriented multi-user semantic communications for transmitting multimodal data. The authors in \cite{ref55} designed a semantic communication framework for textual data transmission by using knowledge graphs. The authors in \cite{ref22} developed a semantic communication system for speech transmission, which leverages attention mechanisms to recover key speech features for robust speech data transmission. The authors in \cite{ref23} introduced a deep neural network (DNN) based joint source-channel coding (JSCC) scheme for image data transmission. The authors in \cite{ref24} studied the combination of data compression and channel coding to improve video transmission performance with various channel conditions. Meanwhile, a number of prior works \cite{ref53, ref57, ref58} have also studied the optimization of semantic communication performance over wireless networks. 
However, the methods designed in \cite{ref62, ref55, ref22, ref23, ref24, ref53, ref57, ref58} may not be applied for dynamic scenarios, where the transmission environment, heterogeneous data sources, and multiple task requirements vary over time. To apply semantic communications for dynamic scenarios, generative artificial intelligence (GAI) seems a promising technique due to its ability for information fusion and the corresponding network optimization. \cite{ref59}.

Currently, several works \cite{ref34, ref35, ref64, ref40, ref65} have leveraged GAI for semantic communications. In particular, the authors in \cite{ref34} proposed a channel denoising diffusion model (CDDM) to learn the distribution of signals and assist in removing channel noise for a JSCC image transmission semantic communication system. The work in \cite{ref35} used the diffusion model-based channel enhancer (DMCE) to enhance the estimation of the channel state, thus improving the recovery of semantic information. The authors in \cite{ref64} proposed a hybrid JSCC scheme to combine the signals transmitted by the digital communication scheme with the JSCC signal generated by the diffusion model to achieve a high reconstruction performance.
The authors in \cite{ref40} utilized a trained style-based generative adversarial network (StyleGAN) model for the extraction and reconstruction of semantic information in JSCC, improving image transmission efficiency while ensuring the quality of the received images. 
However, none of these works considered using multiple access (MA) techniques to improve semantic communication performance. In fact, by using MA techniques, devices can reuse spectrum from data meaning aspect significantly thus improving semantic communication performance. Although a few works \cite{ref60, ref13, ref61, ref14} have investigated the use of MA for semantic communications, they did not capture the complex interactions and interference between signals at the semantic level, and their impacts on semantic communication performance optimization.


\begin{figure}[t]
\centerline{\includegraphics[width=0.45\textwidth]{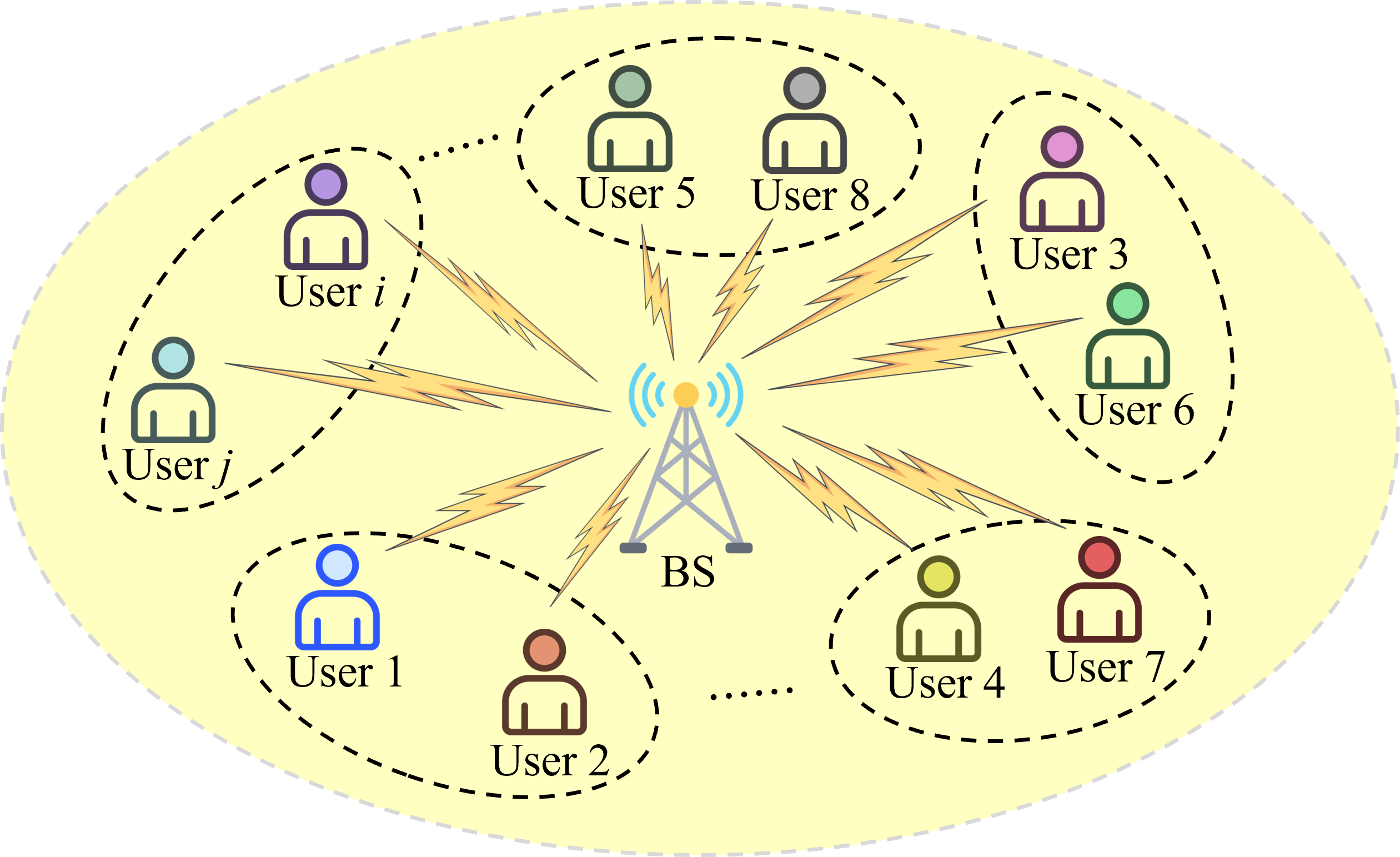}}
\captionsetup{font={small}}
\caption{System model of a downlink SFMA single cell network.}
\label{fig: SFMA}
\end{figure}

The main contribution of our work is a novel multi-user access semantic communication framework named semantic feature multiple access (SFMA), shown in Fig. \ref{fig: SFMA}. 
SFMA is a kind of non-orthogonal multiple access, where some users share the same resource block to transmit information. To decrease inter-user interference, each user first extracts its semantic feature, which will be transmitted over the shared resource block with other users. Due to the different semantic information spaces of different users, the inter-user interference will be effectively controlled by using the semantic decoder at the receiver side. 
The specific contributions are summarized as follows:
\begin{itemize} 
    \item We propose a novel multi-user access semantic communication system called SFMA, and apply it to video transmission. Our considered semantic system consists of one BS and multiple users. The users are paired into several groups and each group consists of two users. The BS first generates the semantic information of the source frames requested by users via semantic encoders. Then, the BS combines the semantic information that will be transmitted to the two users in a group into a single transmitted signal using their semantic features. The users will extract their frames from the combined semantic information. 
    Compared to standard semantic communication systems, our designed system enables the BS to simultaneously transmit data to the users within each group thus significantly improving spatial efficiency.
    
    \item To jointly optimize their transmission rates and the temporal gaps between the simultaneously transmitted video frames, we formulate an optimization problem aiming to maximize the system's transmission rate while minimizing the temporal gap between frames required by the paired users. 
    To solve this problem, we first use simulations to verify that the standard equation of the SINR ratio cannot capture the performance of our SFMA system since it cannot capture data meaning transmission performance. We then introduce a weight parameter into the standard SINR equation to accurately capture the performance of our designed semantic system. 
    
    \item Since the weight parameter in the proposed SINR equation depends on the transmit power of the users, which will significantly increase the complexity of solving the problem, we decompose the formulated problem into three sub-problems: 1) user pairing strategy, 2) intra-group power allocation, and 3) inter-group power allocation. 
    To solve the first sub-problem, we introduce a user pairing algorithm to maximize the sum transmission rate of the SFMA system while minimizing the temporal gap between simultaneously transmitted frames. Then, we optimize the inter-group power allocation in the SFMA system aiming to maximize the sum data rate of all users while satisfying each user's minimum data rate requirement. Finally, we optimize the intra-group power allocation based on the optimized inter-group power allocation strategy to further improve each user's semantic transmission rate within each group.
    
\end{itemize}
Simulation results show that our proposed method can improve the transmission rate by up to $24.8\%$, $45.8\%$, and $66.1\%$ respectively, compared to the fixed-power non-orthogonal multiple access (F-NOMA), orthogonal joint source-channel Coding (O-JSCC), and orthogonal frequency division multiple access (OFDMA) schemes.

\begin{figure*}[t]
\centerline{\includegraphics[width=0.9\textwidth]{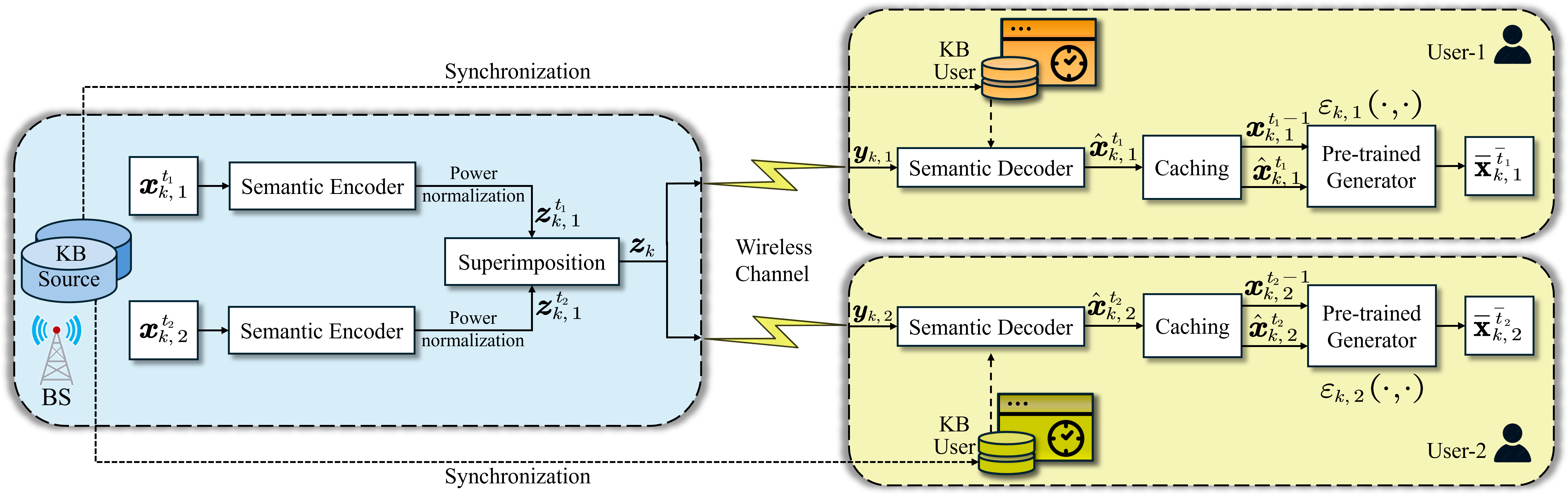}}
\captionsetup{font={small}}
\caption{The framework of GAI-enabled semantic video transmission in the SFMA system.}
\label{fig: system model}
\end{figure*}

The rest of the paper is organized as follows: Section \ref{sec: system model} elaborates on the system model of the downlink SFMA framework. Section \ref{sec: problem} presents the semantic interference factor and the overall problem formulation. Section \ref{sec: solutions} illustrates the user pairing strategy, inter-group power allocation algorithm for multiple groups, and intra-group power allocation algorithm within each group in the SFMA system. Section \ref{sec: results} provides experimental results and analysis, demonstrating the effectiveness of the proposed methods. Finally, Section \ref{sec: conclusion} concludes the paper and discusses potential future research directions.

\section{System Model}
\label{sec: system model}


We consider a downlink semantic communication system that consists of one base station (BS) and $M$ single-antenna users as shown in Fig. \ref{fig: SFMA}. The $M$ users are divided into $K=M/2$ groups  (we consider the case that $M$ is an even number) such that each group consists of two users, as shown in Fig. \ref{fig: system model}. The specific user grouping strategy will be described in Section \ref{sec: user pairing}. Each group is assumed to have users 1 and 2. Let $\mathcal{M}=\left\{ 1,\cdots, M \right\} $ and $\mathcal{K}=\left\{ 1,\cdots, K \right\} $ denote the set of users and groups, respectively. Note that we implement orthogonal multiple access (OMA) among different user groups. The BS will first generate semantic information of source images requested by the users via semantic encoders.
Then, the BS will combine the semantic information that will be transmitted to the two users in group $k$ into a single transmitted signal using their semantic features. 
Upon receiving the compressed semantic information, two users reconstruct the original frames and utilize GAI-based video interpolation models to generate intermediate frames. During the transmission, we assume that the raw frames $\boldsymbol{x}_{k,1}^{t_1}$ and $\boldsymbol{x}_{k,2}^{t_2}$ at time $t_1$ and $t_2$ required by two users in group $k$, respectively are expected to be transmitted. The semantic information of user 1 in group $k$ is $\boldsymbol{\tilde{z}}^{t_1}_{k,1}\left( \boldsymbol{x}^{t_1}_{k,1};\boldsymbol{\Theta }_{k,1} \right),$ while the semantic information of user 2 is $ \boldsymbol{\tilde{z}}^{t_2}_{k,2}\left( \boldsymbol{x}^{t_2}_{k,2};\boldsymbol{\Theta }_{k,2} \right)$ with $\boldsymbol{\Theta }_{k,i}$ being the learnable parameters of the semantic encoder of user $i$ in group $k$. 
Let $\boldsymbol{z}^{t_1}_{k, 1}$ and $\boldsymbol{z}^{t_2}_{k, 2}$ denote the normalized versions of the signals $\boldsymbol{\tilde{z}}^{t_1}_{k,1}$ and $\boldsymbol{\tilde{z}}^{t_2}_{k,2}$, respectively. 
Then, the signal that combines $\boldsymbol{z}^{t_1}_{k, 1}$ and $\boldsymbol{z}^{t_2}_{k, 2}$ at the BS is 
\begin{equation}
    \boldsymbol{z}_{k}=\sqrt{p_{k ,1}}\boldsymbol{z}^{t_1}_{k, 1}+\sqrt{p_{k, 2}}\boldsymbol{z}^{t_2}_{k, 2},
\label{eq: signal SC}
\end{equation}
where $p_{k,1}$ and $p_{k,2}$ are the allocated power used for the semantic information transmission of user 1 and user 2 respectively in group $k$, $\eta _{k, 1}, \eta _{k, 2}$ denote the power allocation factor for user 1 and user 2 in group $k$ with $\sqrt{\frac{p_{k,1}}{p_{k,2}}} = \frac{\eta_{k,1}}{\eta_{k,2}}$, respectively. The BS then sends the superimposed signal $\boldsymbol{z}_{k}$ through a wireless noisy channel. Let $h_{k, i}$ be the channel coefficient between the BS and user $i$ in group $k$, the received signal of user $i$ in group $k$ is denoted as
\begin{equation}
    \boldsymbol{y}_{k, i}=h_{k, i}\boldsymbol{z}_{k}+n_{k, i}, i\in \left\{ 1,2 \right\},
\label{eq: y_ki}
\end{equation}
where $n_{k, 1}$ and $n_{k, 2}$ are independent and identically distributed (i.i.d.) Gaussian variables with variances $\sigma _{k, 1}^{2}$ and $\sigma _{k, 2}^{2}$, respectively, i.e., $n_{k,i}\sim \mathcal{N}\left( 0, \sigma _{k, i}^{2}\right) , i\in \left\{ 1,2 \right\}$. The signal-to-noise ratio (SNR) of user $i$ in group $k$ is defined as
\begin{equation}
    \text{SNR}_{k,i} = 10 \log_{10} \left( \frac{p_{k,i}}{\sigma_{k,i}^2} \right)\text{dB}. 
\end{equation}
Each user $i$ reconstructs frame $\boldsymbol{x}^{t_i}_{k, i}$ from its received signal $\boldsymbol{y}_{k, i}$. The reconstructed frame is represented by $\boldsymbol{\hat{x}}^{t_i}_{k, i}\left( \boldsymbol{y}_{k, i} ;\boldsymbol{\Phi }_{k, i} \right), i\in \left\{ 1,2 \right\}$, where $\boldsymbol{\Phi }_{k, i}$ is the learnable parameter of the semantic decoder of user $i$ in group $k$.


Once the raw video frames are reconstructed, they are stored in the caching unit. The caching unit adjusts the order of the video frames based on the sequence information provided by the knowledge base (KB) on the user side. After the adjustment, two consecutive frames are simultaneously transmitted to a GAI-based pre-trained generator to produce intermediate frames.
The input of the pre-trained generators is the recovered frame $\boldsymbol{\hat{x}}^{t_i}_{k, i}$ and its preceding frame $\boldsymbol{x}_{k,i}^{t_i - 1}$.
The pre-trained generators are GAI-based video frame interpolation models which generate intermediate frames $\boldsymbol{\bar{x}}_{k, i}^{\bar{t}_i}$ for user $i$, resulting in smoother video playback, where $\bar{t}_i\in \left( t_i - 1,t_i \right), i \in \left\{1, 2\right\}$. The process can be denoted as
\begin{equation}
\boldsymbol{\bar{x}}_{k,i}^{\bar{t}_i}=\boldsymbol{\varepsilon }_{k,i}\left( \boldsymbol{x}_{k,i}^{t_i - 1},\boldsymbol{\hat{x}}_{k,i}^{t_i} \right), i \in \left\{1, 2\right\},
\end{equation}
where $\boldsymbol{\varepsilon }_{k, i}\left( \cdot, \cdot \right)$ denotes the GAI-enabled video frame interpolation models which are illustrated in Section \ref{sec: VFI}.

\subsection{Semantic Encoder and Decoder}
Next, we introduce the models of our designed semantic encoder and decoder.

\begin{figure*}[t]
\centerline{\includegraphics[width=\textwidth]{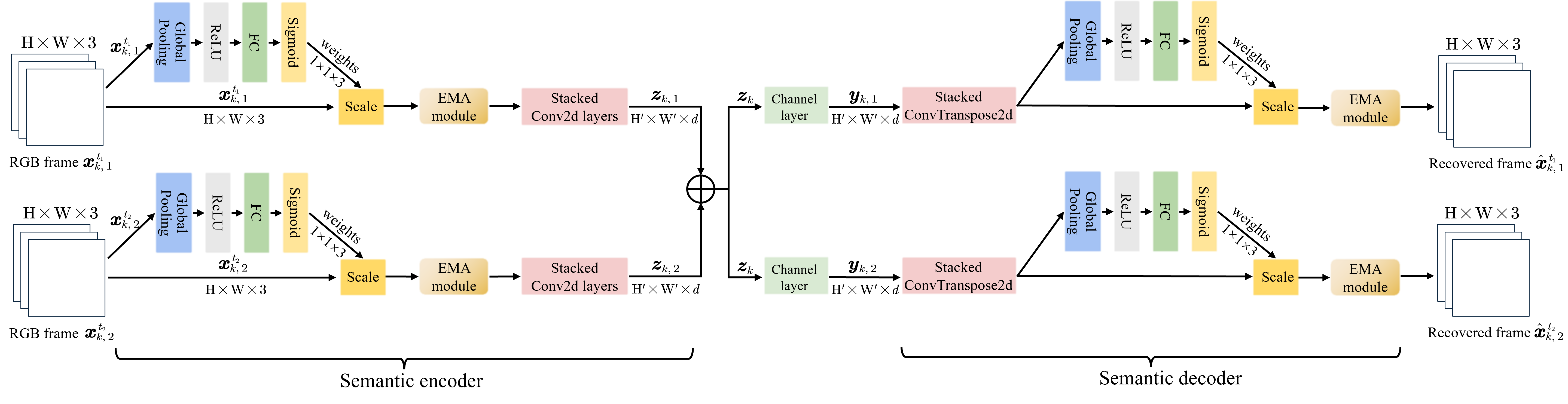}}
\captionsetup{font={small}}
\caption{The architecture of the Attention-based JSCC network.}
\label{fig: JSCC}
\end{figure*}

As shown in Fig. \ref{fig: JSCC}, the designed encoder consists of an input layer, attention modules, stacked convolutional layers, and a channel layer. On the encoder side, dimensionality reduction is essential to match the channel bandwidth and compress the transmitted data. However, directly reducing the dimensionality of images can significantly impair the dependencies among pixels and channels of images. Given that variations between consecutive frames are generally small, we integrate an attention mechanism to enhance the interdependencies among image pixels, thereby minimizing the mutual interference of latent feature vectors generated from two frames. 
The specific structure of the encoder is illustrated as follows:

\begin{itemize}
    \item Input layer: The input of the encoder is the raw frames required by users in one group. The frames can be considered as RGB images which have three channels R, G, and B.
    \item Attention modules: The attention modules are composed of two blocks. First, we use the Squeeze-and-Excitation networks \cite{ref44}. The weights for these channels are computed by calculating the global average values of each channel to evaluate the significance of the R, G, and B channels of the video frames. A single dense layer is employed to adjust the importance of each channel adaptively. The original features are subsequently scaled to the derived weights. Second, we leverage the efficient multi-scale attention (EMA) module \cite{ref43} to enhance feature representation capabilities, where different scales of convolutions are implemented and the output features of the two parallel branches are further aggregated by a cross-dimension interaction for capturing pixel-level pairwise relationship.
    \item Convolutional layers: several convolutional layers are utilized to compress the latent feature vector. The final convolutional layer in the encoder aims to achieve the desired \textit{compression ratio}.
    We call the input frame dimension $n= H \times W \times 3$ as the \textit{source bandwidth}, where $H$ and $W$ denote the height and width of the frame and the third dimension denotes R, G, and B channels. The \textit{compression ratio} which characterizes the available channel resources is defined as $\gamma=b/n$, and \textit{b} refers to the \textit{channel bandwidth}. 
    \item Channel layers: The AWGN channel is considered as one untrainable layer at the end of semantic encoders.
\end{itemize}
The semantic decoder includes convolutional layers and the EMA module. The semantic encoder and decoder are designed to be jointly trained to minimize the distortion of the original frames $\boldsymbol{x}_{k, 1}^{t_1}$ and $\boldsymbol{x}_{k, 2}^{t_2}$, and their reconstructions $\boldsymbol{\hat{x}}_{k, 1}^{t_1}$ and $\boldsymbol{\hat{x}}_{k, 2}^{t_2}$. We use mean square error (MSE) to measure the distortion, and the loss function can be expressed as
\begin{equation}
    \mathcal{L}=\sum_{\mathbf{\varPsi}}{\frac{1}{2}\text{MSE}\left( \boldsymbol{x}_{k, 1}^{t_1},\boldsymbol{\hat{x}}_{k, 1}^{t_1} \right) +\frac{1}{2}\text{MSE}\left( \boldsymbol{x}_{k, 2}^{t_2},\boldsymbol{\hat{x}}_{k, 2}^{t_2} \right)},
\end{equation}
where $\mathbf{\varPsi} = \left( \boldsymbol{x}_{k,1}^{t_1},\boldsymbol{x}_{k, 2}^{t_2} \right)
	\in \mathcal{D}_{train}$, $\mathcal{D}_{train}$ is the training dataset, and $\text{MSE}\left( \boldsymbol{x},\boldsymbol{\hat{x}} \right) =\frac{1}{n}\lVert \boldsymbol{x}-\boldsymbol{\hat{x}} \rVert ^2$.

\subsection{GAI-based Video Frame Interpolation}
\label{sec: VFI}

Next, we introduce a GAI-based video frame interpolation model to generate the intermediate frame.
In particular, we propose to use the Cross-Attention Transformer for Video Interpolation (TAIN) model in \cite{ref42}. Next, we explain how it works for users $i\in \left\{ 1,2 \right\}$ within a group. 
The input of the TAIN model is $\boldsymbol{x}_{k,i}^{t_i - 1}$ and $\boldsymbol{\hat{x}}_{k, i}^{t_i}$, and the output is the predicted intermediate frame $\mathbf{\bar{x}}_{k, i}^{\bar{t}_i}$.
 TAIN mainly consists of a Cross Similarity (CS) module and an Image Attention (IA) module. The CS module is a novel vision transformer module, which globally aggregates features from two input frames, $\boldsymbol{x}_{k, i}^{t_i - 1}$ and $\boldsymbol{\hat{x}}_{k, i}^{t_i}$, that is similar in appearance to those in the current prediction $\boldsymbol{\bar{x}}_{k, i}^{\bar{t}_i}$ of the raw intermediate frame $\boldsymbol{x}_{k, i}^{\bar{t}_i}$. 
To handle occlusions in the interpolated features within the aggregated CS features, the IA module is used to prioritize CS features in order to enhance the overall interpolation quality. 
To train the TAIN model, we use a loss function that includes the difference between the predicted intermediate frame $\mathbf{\bar{x}}_{k, i}^{\bar{t}_i}$ and the raw intermediate frames $\mathbf{x}_{k, i}^{\bar{t}_i}$. To maintain sharpness and preserve the edges of the generated interpolated frames, thereby enhancing the overall quality of the video interpolation, the $L_1$ loss on the gradient difference is incorporated in the loss function $\mathcal{L}_{\text{TAIN}, i}$ of the TAIN model, which is denoted by
\begin{equation}
    \mathcal{L}_{\text{TAIN}, i}=\lVert \mathbf{\bar{x}}_{k, i}^{\bar{t}_i}-\mathbf{x}_{k, i}^{\bar{t}_i} \rVert _1+\kappa \lVert \nabla \mathbf{\bar{x}}_{k, i}^{\bar{t}_i}-\nabla \mathbf{x}_{k, i}^{\bar{t}_i} \rVert _1, i\in \left\{ 1,2 \right\},
\end{equation}
where $\lVert \mathbf{\bar{x}}_{k, i}^{\bar{t}_i}-\mathbf{x}_{k, i}^{\bar{t}_i} \rVert _1$ is the L1-norm of the difference between the predicted intermediate frame $\mathbf{\bar{x}}_{k, i}^{\bar{t}_i}$ and the raw intermediate frame $\mathbf{x}_{k, i}^{\bar{t}_i}$, $\lVert \nabla \mathbf{\bar{x}}_{k, i}^{\bar{t}_i}-\nabla \mathbf{x}_{k, i}^{\bar{t}_i} \rVert _1$ is the L1-norm of the gradient difference between the predicted intermediate frame $\mathbf{\bar{x}}_{k, i}^{\bar{t}_i}$ and the gradient of the raw intermediate frames $\mathbf{x}_{k, i}^{\bar{t}_i}$, and $\kappa$ is a weight parameter to balance the importance of these two differences on TAIN model training. 

\section{Problem Formulation}
\label{sec: problem}
In this section, we first introduce a new semantic metric to capture the semantic communication performance in the SFMA system. Then, we introduce the considered optimization problem to jointly maximize the sum semantic transmission rate and minimize the temporal gap between the simultaneously transmitted frames. 

\subsection{Semantic Communication Performance Metric}
\label{sec: interference factor}
To optimize the data rate of the proposed SFMA system, it is necessary to derive the appropriate data rate equation for our considered SFMA since standard channel capacity in \cite{ref1} fails to capture the complex interactions and interference between signals at the semantic level. For example, suppose the channel noise power is $\sigma^2$, when we use standard communication techniques to transmit frame $\boldsymbol{x}^{t_1}_{k,1}$ and $\boldsymbol{x}^{t_2}_{k,2}$ to users 1 and 2, respectively, the signal-to-interference-plus-noise ratio (SINR) of the received signal $\boldsymbol{y}_{k,1}$ is expressed as
\begin{equation}
    \gamma_{k,1}\left( p_{k,1},p_{k,2} \right)=\frac{p_{k,1}\left| h_{k,1} \right|^2}{p_{k,2}\left| h_{k,1} \right|^2+\sigma^{2}}.
    \label{eq: conventional SINR}
\end{equation}
Therefore, we cannot use (\ref{eq: conventional SINR}) to represent the SINR of our considered semantic communications. Next, we further use a simulation to explain why (\ref{eq: conventional SINR}) cannot be used for semantic communications. In particular, to show the actual SINR of semantic communications, we can use the mean square error (MSE) between the original frame $\boldsymbol{x}^{t_1}_{k,1}$ and the recovered frame $\boldsymbol{\hat{x}}^{t_1}_{k,1}$ to represent the SINR of the signal $\boldsymbol{x}^{t_1}_{k,1}$. Hence, we have
\begin{equation}
    \gamma_{k,1}\left( p_{k,1},p_{k,2} \right)=\frac{p_{k,1}\left| h_{k,1} \right|^2}{d (\boldsymbol{x}^{t_1}_{k,1}, \boldsymbol{\hat{x}}^{t_1}_{k,1})},
    \label{eq: SINR2}
\end{equation}
where $d (\boldsymbol{x}^{t_1}_{k,1}, \boldsymbol{\hat{x}}^{t_1}_{k,1})$ is the MSE between the original frame $\boldsymbol{x}^{t_1}_{k,1}$ and the recovered frame $\boldsymbol{\hat{x}}^{t_1}_{k,1}$.  Fig. \ref{fig: compare SINR} shows how SINR changes as SNR varies when $p_{k,1}=p_{k,2}$. This figure shows that a gap exists between the SINR resulting from standard communications utilizing the MMSE algorithm and semantic communications in the framework of combined JSCC since the SINR equation used in standard communications does not consider semantic features. 
\begin{figure}[t]
    \centerline{\includegraphics[width=0.45\textwidth]{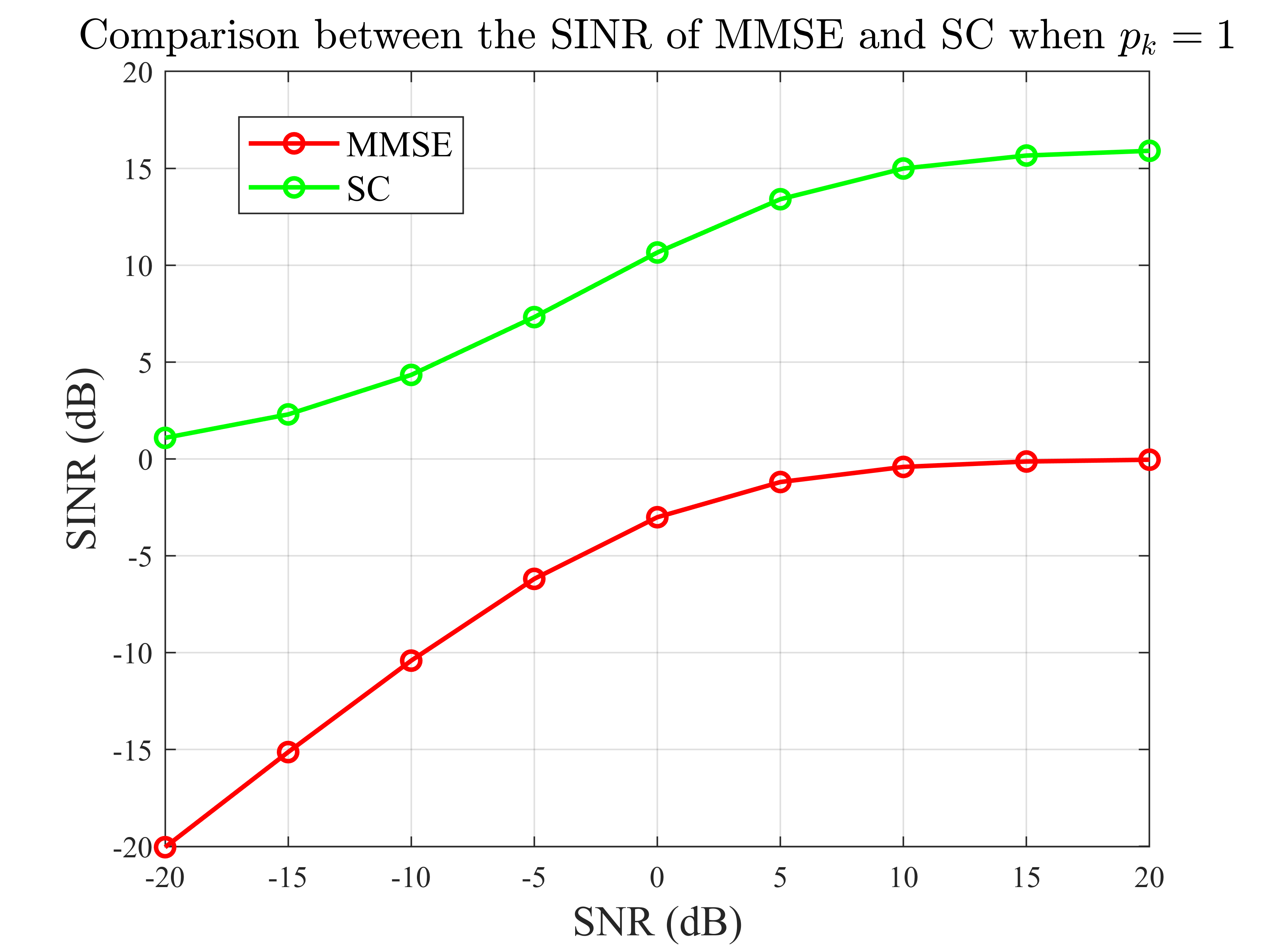}}
    \captionsetup{font={small}}
    \caption{The comparison of SINR achieved by MMSE and the proposed semantic communication system.}
    \label{fig: compare SINR}
\end{figure}
\begin{figure}[t]
    \centerline{\includegraphics[width=0.45\textwidth]{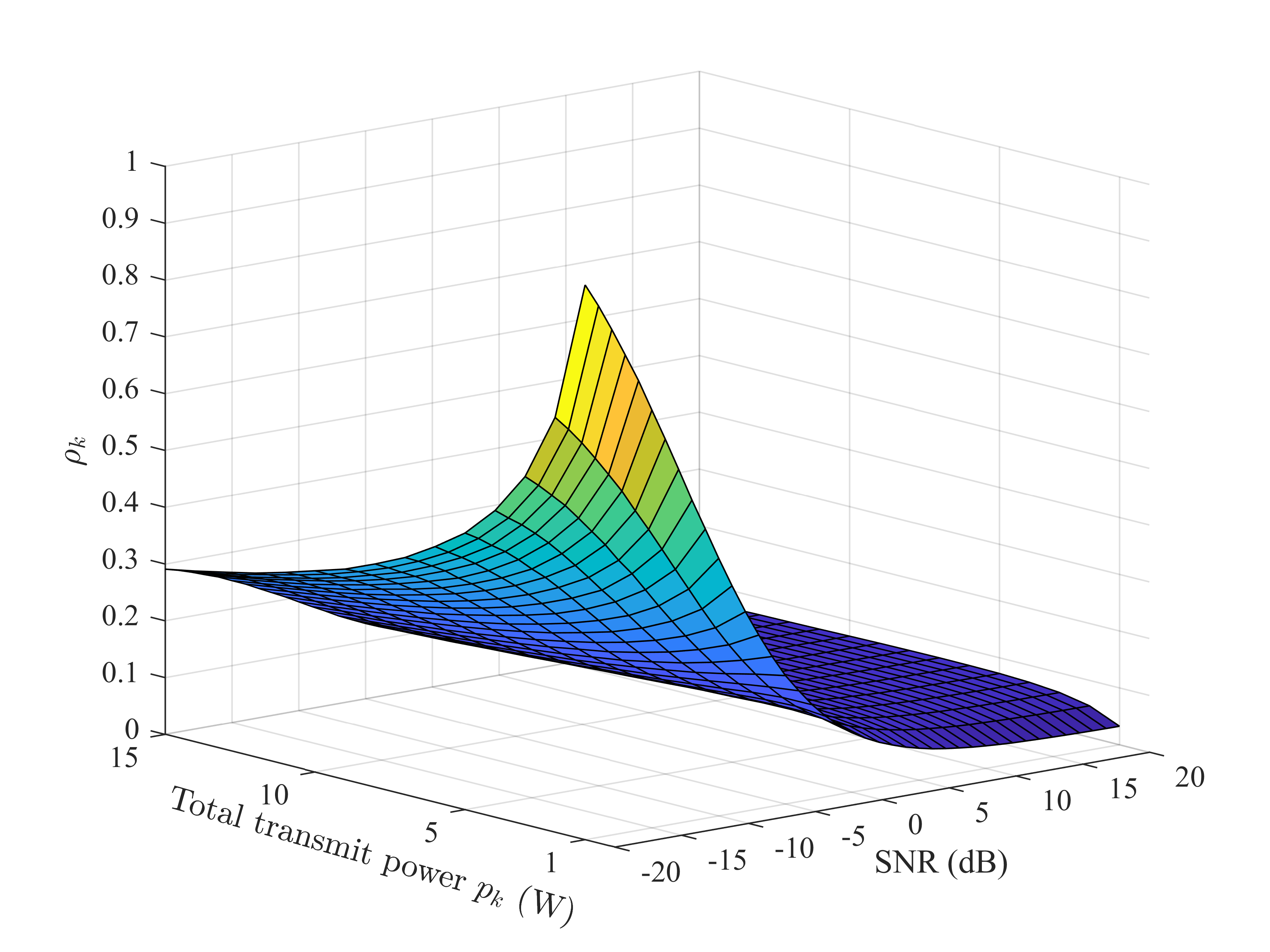}}
    \captionsetup{font={small}}
    \caption{The impact of the semantic interference factor $\rho_k$ on SNR for two users in the proposed semantic communication system.}
    \label{fig: rho 3D}
\end{figure}

To accurately represent the SINR of semantic communications, we introduce a semantic interference factor $\rho$ into \eqref{eq: conventional SINR} such that the SINR of the signal $\boldsymbol{x}_{k,1}$ is denoted by
\begin{equation}
    \gamma_{k,1}\left( p_{k,1},p_{k,2}\right)=\frac{p_{k,1}\left| h_{k,1} \right|^2}{\rho _{21}^{k}\left(p_{k,1},p_{k,2}\right) p_{k,2}\left| h_{k,1} \right|^2+\sigma^{2}},
\label{eq: SINR with rho}
\end{equation}
where $\rho^k _{21}$ captures the impact of signal $\boldsymbol{x}_{k,2}$ on signal $\boldsymbol{x}_{k,1}$ within the semantic space. In SFMA systems, the BS usually allocates a fixed amount of signal transmission power to each user group according to an optimal power allocation policy. Therefore, \eqref{eq: SINR with rho} demonstrates that $\gamma_{k,1}\left( p_{k,1},p_{k,2}\right)$ is a function of the total allocated power in group $k$, i.e., $p_{k,1} + p_{k,2}$. This dependence highlights that the interference between the users in the same group is not solely a function of both physical layer parameters and the semantic interactions between their transmissions. Therefore, to maximize the overall data rate of the SFMA system, it is essential to first consider a group-level power allocation strategy across multiple groups and then optimize the power allocation within each group.

By strategically allocating total power, i.e., $p_{k,1}+p_{k,2}$ among the groups and subsequently optimizing the distribution between $p_{k,1}$ and $p_{k,2}$ within each group, the system can effectively enhance the SINR for each group users and minimize semantic interference between them. This layered optimization, starting at the group level and followed by power allocation within each group, ensures that the total system data rate is maximized while taking into account both physical and semantic interference dynamics.
Fig. \ref{fig: rho 3D} shows how $\rho_{21}^{k}$ varies with $\left(p_{k}, \text{SNR}_{k,1}\right)$ when $p_{k,1}=p_{k,2}$. This relationship can be represented as a function of $p_{k, 1}$ and $p_{k,2}$, i.e., $\rho^k_{21}=\rho^k_{21}\left(p_{k, 1}, p_{k,2}\right)$.
Given (\ref{eq: SINR with rho}), the achievable rate of user $i$ in group $k$ is 
\begin{align}
    r_{k,i}\left( p_{k,1},p_{k,2} \right)=\log _2\left( 1+ \gamma_{k,i}\left( p_{k,1},p_{k,2}\right)
    \right). 
\label{eq: r_k}
\end{align}
Therefore, the sum rate in user group \textit{k} is calculated as
\begin{equation}
    R_{k,12}\left( p_{k,1},p_{k,2} \right) = r_{k,1}\left( p_{k,1},p_{k,2} \right) + r_{k,2}\left( p_{k,1},p_{k,2} \right).
\end{equation}

\subsection{Problem Formulation}

To optimize the user pairing strategy and the transmit power $p_{k, i}$ allocated to each user, we formulate an optimization problem to maximize the total sum rate of the system and minimize the temporal gap between the simultaneously transmitted frames in one group while satisfying the minimum data rate requirement of each user. Let $a_{k,m}\in \left\{ 0,1 \right\}$ be an index that indicates whether user $m$ is assigned to group $k$.
The optimization problem can be developed as
\begin{subequations}
\begin{align}
    \underset{\boldsymbol{a}, \boldsymbol{P}}{\max} & \sum_{k=1}^K{\sum_{i=1}^M{\sum_{j=1,j\ne i}^M{a_{k,i}a_{k,j}\left( R_{k,ij}\left( p_{k,1},p_{k,2} \right)-\alpha D_{k,ij} \right)}}}, \label{eq: Problem} \tag{12} \\ 
    \text{s.t.} \; \; & \sum_{k=1}^K{a_{k,m}=1, \; \forall m\in \mathcal{M},} \label{eq: akm 1} \\
    & \sum_{m=1}^M{a_{k,m}=2}, \forall k\in \mathcal{K}, \label{eq: akm 2} \\
    & a_{k,i}a_{k,j}D_{k,ij}\le \Delta , \; \forall k\in \mathcal{K}, \; \forall i,j\in \mathcal{M},i\ne j, \label{eq: a 1} \\
    & a_{k,i},a_{k,j}\in \left\{ 0,1 \right\} , \; \forall k\in \mathcal{K}, \; \forall i,j\in \mathcal{M}, \label{eq: a 2} \\
    & r_{k,i}\left(p_{k,1},  p_{k,2}\right) \ge R_{k,i}, \; \forall k\in \mathcal{K}, \; i\in\{1,2\}, \label{eq: R constraints} \\
    & \sum_{k=1}^K \left(p_{k,1}+p_{k,2}\right) \le P_{\max}, \label{eq: power constraints 1} \\
    & {p}_{k,i} \ge 0, \; \forall k\in \mathcal{K}, \; i\in\{1,2\}, \label{eq: power constraints 2}
\end{align}
\end{subequations}
where $D_{k,ij}$ is the temporal gap between user $i$ and user $j$ in group \textit{k},
$\alpha$ is the weight parameter to balance the impact of the achievable transmission rate and the temporal gap, $\Delta$ denotes the maximum tolerable temporal gap which ensures that the GAI-based model employed at the users for frame interpolation can generate an acceptable intermediate frame, $R_{k, i}$ is the minimum transmit rate requirement of user $i$ in group $k$, $\boldsymbol{P}=\left(\boldsymbol{p}_1, \cdots, \boldsymbol{p}_K \right) ^T$ is the power allocation matrix, $\boldsymbol{p}_k=\left( p_{k,1},p_{k,2} \right)^T, \forall k \in \mathcal{K}$ is the power allocation vector, $\boldsymbol{a}=\left( \left(a_{1,1}, a_{1,2}\right), \cdots, \left(a_{K,1}, a_{K,2}\right) \right)$ is the user pairing matrix, and $P_{\max}$ is the maximum transmit power of the BS. 

Solving problem \eqref{eq: Problem} faces several challenges. First, the power $p_{k,1}$ and $p_{k,2}$ used for semantic information transmission in each group $k$ are coupled in both the objective function and the constraints (i.e., \eqref{eq: R constraints} and \eqref{eq: power constraints 1}). Second, the problem is non-convex due to the interaction between the power allocation and the semantic interference factors. Third, the problem involves optimizing the power allocation for multiple groups simultaneously. As the number of groups increases, the computational complexity of solving the problem increases. Therefore, to solve problem \eqref{eq: Problem}, we propose a three-step iterative algorithm that decomposes the problem in \eqref{eq: Problem} into three interrelated sub-problems and iteratively optimizes these problems to find the global solution. 
This problem decomposition method can significantly reduce the complexity of solving the overall optimization problem.

\section{Solution for the Optimization Problem}
\label{sec: solutions}
To solve the problem in \eqref{eq: Problem}, we decompose it into three sub-problems: 1) user pairing strategy, 2) intra-group power allocation, and 3) inter-group power allocation. Next, we first introduce a user pairing algorithm to maximize the sum transmission rate of the SFMA system while minimizing the temporal gap between simultaneously transmitted frames. Second, we optimize the inter-group power allocation in the SFMA system aiming to maximize the sum data rate of all users while satisfying each user's minimum data rate requirement. Third, we optimize the intra-group power allocation based on the optimized inter-group power allocation strategy to further improve each user's semantic transmission rate within each group. 
Our proposed method that solves the problem \eqref{eq: Problem} using three steps can simplify the optimization process and effectively reduce the complexity. 

\subsection{User Pairing Algorithm Design} \label{sec: user pairing}

Given the fixed power allocation for intra-group and inter-group users, the optimization problem in \eqref{eq: Problem} can be rewritten as

\begin{subequations}
\begin{align}
    \underset{\boldsymbol{a}}{\max} & \sum_{k=1}^K{\sum_{i=1}^M{\sum_{j=1,j\ne i}^M{a_{k,i}a_{k,j}\left( R_{k,ij}\left( p_{k,1},p_{k,2} \right)-\alpha D_{k,ij} \right)}}}, \label{eq: user pairing} \tag{13} \\ 
    \text{s.t.} \; \; & \eqref{eq: akm 1} - \eqref{eq: a 2}. \notag
\end{align}
\end{subequations}

To solve this problem, we propose to use the Gale-Shapley algorithm in \cite{ref47}. Since the Gale-Shapley algorithm in \cite{ref47} is a matching algorithm, 
we define user pairing as a matching process. In particular, if users $i$ and $j$ are paired as a group, we state that they are matched with each other. Then, we need to define the preference value and the matching condition that will be used in the Gale-Shapley algorithm. The preference value $V_{k,ij}$ between users $i$ and $j$ to form group $k$ is denoted by
\begin{equation}
    V_{k, ij}\left( p_{k,i},p_{k,j} \right) = R_{k,ij}\left( p_{k,i},p_{k,j} \right)-\alpha D_{k,ij}.
\end{equation}
Based on the perfect channel state information, the preference list $L(i)$ of user $i$ paired in group $k$ is denoted by
\begin{equation}
    L\left( i \right) = \left[ V_{k,i1}\left( p_{k,i},p_{k,1} \right), \cdots, V_{k,iM}\left( p_{k,i},p_{k,M} \right)  \right].
\end{equation}
The preference list $L\left( i \right)$ contains the preference values between user $i$ and other users in the system. We say user $l$ prefers user $i$ to user $j$ if 
$V_{il}>V_{jl}$.
Then, we illustrate the matching condition. Suppose that user $j$ has been paired with user $l$ and user $i$ is requesting to be paired with user $j$, user $j$ will choose to pair with user $i$ and dispose of user $l$, if and only if the following matching condition is satisfied:
\begin{equation}
    V_{ij}>V_{jl} \land D_{ij}\le \Delta.
\end{equation}
This pairing process continues iteratively until all users have been paired. 
The output of the algorithm is the list of user pairs $\mathcal{S}$.
This approach ensures that each user forms an optimal pairing based on their preferences so that the sum semantic transmission rate is maximized while meeting the temporal gap threshold.

\subsection{Inter-group Power Allocation Optimization in the SFMA System}
\label{sec: power allocation among groups}
After the user pairing scheme is determined, we design a power allocation algorithm for the SFMA system. To optimize the power allocated to each group,  we first let $p_k=p_{k, 1}+p_{k,2}$ to represent the power of group $k$. Fix the power allocation factor, the semantic interference factor for each user group is expressed as a function of $p_k$, i.e., $\rho^k_{ji}=\rho^k_{ji}\left( p_k \right)$. By substituting \eqref{eq: r_k} into \eqref{eq: R constraints}, the optimization problem in \eqref{eq: Problem} can be reformulated as
\begin{subequations} 
\label{eq: overall problem2}
    \begin{align}
    \underset{\boldsymbol{p}}{\max} & \;  \sum_{k=1}^{K} R_{k,12}\left( p_{k,1},p_{k,2} \right), \tag{17} \label{eq: subproblem 1}  \\
    \text{s.t.} \; \; & \left(\rho^k_{ji} \left(p_{k} \right) p_{k,j} + p_{k,i}\right)\left|h_{k,i}\right| + \sigma^2_{k,i} \ge \notag\\
    & 2^{R_{k,i}} \left( \rho^k_{ji} \left(p_{k} \right) p_{k,i}\left|h_{k,i}\right| + \sigma^2_{k,i} \right),\notag\\
    & \forall k\in \mathcal{K}, i \ne j, \; i,j\in\{1,2\}, \label{eq: R constraints extent} \\
    & \sum_{k=1}^K p_{k} \le P_{\max}, \label{eq: BS power} \\
    &  \eqref{eq: power constraints 2}, \; \boldsymbol{p} \ge 0, \label{eq: power constraints}
    \end{align} 
\end{subequations}
where $\boldsymbol{p} = \left(p_1, \cdots, p_K \right)^T$ is the inter-group power allocation vector. 
\begin{table*}[!t]
\centering
\begin{minipage}{1\textwidth}
    \begin{align}\label{long equation 1}
  &L\left( p_k,\boldsymbol{\lambda }_1,\boldsymbol{\lambda }_2,\mu \right) =\sum_{k=1}^{K}{\left[ r_{k,1}\left( p_{k,1}, p_{k,2} \right)+r_{k,2}\left(p_{k,1}, p_{k,2} \right) \right] + \mu \left( P_\text{max}-\sum_{k=1}^{K}{p_k} \right)} \notag
  \\
  & \quad \quad \quad \quad \quad \quad \quad +\sum_{k=1}^{K} {\lambda _{k,1}\left[ \left( p_{k,1}+\rho _{21}^k\left( p_{k} \right)p_{k,2} \right) \left| h_{k,1} \right|^2+\sigma _{k,1}^{2}-2^{R_{k,1}}\left( \rho _{21}^k\left( p_{k}\right)p_{k,2}\left| h_{k,1} \right|^2+\sigma _{k,1}^{2} \right) \right]}
 \notag
  \\
  & \quad \quad \quad \quad \quad \quad \quad +\sum_{k=1}^{K} {\lambda _{k,2}\left[ \left( p_{k,2}+\rho _{12}^k\left( p_{k} \right)p_{k,1} \right)\left| h_{k,2} \right|^2+\sigma _{k,2}^{2}-2^{R_{k,2}}\left( \rho _{12}^k\left( p_{k}\right)p_{k,1}\left| h_{k,2} \right|^2+\sigma _{k,2}^{2} \right) \right]}.
\end{align}
\medskip
\hrule
\end{minipage}
\end{table*}

We then introduce the Lagrangian penalty of the problem (\ref{eq: subproblem 1}), as shown in \eqref{long equation 1}, where $\boldsymbol{\lambda }_{1}=\left( \lambda _{1,1}, \cdots , \lambda _{K,1} \right) ^T, \boldsymbol{\lambda }_{2}=\left( \lambda _{1,2}, \cdots , \lambda _{K,2} \right) ^T, \mu$ are non-negative Lagrange multipliers associated with the constraints \eqref{eq: R constraints extent} and \eqref{eq: BS power}. Then, we can find the extreme points of the allocated power $p_k$ in each group $k$, as shown in the following Lemma. 

\begin{lemma}
    For each group $k$, the extreme points of the allocated power $p_k$ in problem \eqref{eq: overall problem2} are
\begin{align} \label{eq: pk1}
    p_{k_1}=\frac{\sigma _{k,1}^{2}\left( 2^{R_{k,1}}-1 \right)}{\left| h_{k,1} \right|^2\left( \eta _{k,1}+\eta _{k,2}\rho _{21}^k\left(p_{k_1} \right)\left( 1-2^{R_{k,1}} \right) \right)},
\end{align}
\begin{align} \label{eq: pk2}
    p_{k_2}=\frac{\sigma _{k,2}^{2}\left( 2^{R_{k,2}}-1 \right)}{\left| h_{k,2} \right|^2\left( \eta _{k,2}+\eta _{k,1}\rho _{12}^k\left(p_{k_2} \right)\left( 1-2^{R_{k,2}} \right) \right)},
\end{align}
 and $p_{k3}$ must satisfy the following equation:
\begin{align}\label{eq: pk3}
    \frac{\partial \left[ r_{k,1}\left( p_{k,1}, p_{k,2} \right) +r_{k,2}\left(p_{k,1}, p_{k,2} \right) \right]}{\ln 2 \cdot \partial p_k} -\mu = 0.
\end{align} 
\end{lemma}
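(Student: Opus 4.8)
\emph{Proof strategy.} The plan is to read the extreme points off the first-order (KKT) conditions of problem~\eqref{eq: overall problem2} via its Lagrangian~\eqref{long equation 1}. Since the inter-group stage fixes the intra-group split, I would first substitute $p_{k,1}=\eta_{k,1}p_k$ and $p_{k,2}=\eta_{k,2}p_k$, which turns~\eqref{long equation 1} into a function of the scalars $p_1,\dots,p_K$ (and the multipliers $\boldsymbol{\lambda}_1,\boldsymbol{\lambda}_2,\mu$), with the semantic interference factors depending on $p_k$ only, i.e.\ $\rho^k_{ji}=\rho^k_{ji}(p_k)$. Because $p_k$ appears only in the $k$-th summand of every sum in~\eqref{long equation 1}, the stationarity equation $\partial L/\partial p_k=0$ decouples over $k$, so it suffices to fix one group $k$; an extreme point of $p_k$ is then a point at which this equation holds together with the complementary-slackness relations attached to the two rate constraints in~\eqref{eq: R constraints extent} (the BS power budget~\eqref{eq: BS power} contributes only the common term $-\mu$ and plays no role in the per-group analysis).

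Next I would carry out the case split dictated by complementary slackness. If $\lambda_{k,1}>0$, the first inequality in~\eqref{eq: R constraints extent} is active, i.e.\ $\gamma_{k,1}\!\left(p_{k,1},p_{k,2}\right)=2^{R_{k,1}}-1$ by~\eqref{eq: r_k}; substituting $p_{k,1}=\eta_{k,1}p_k$, $p_{k,2}=\eta_{k,2}p_k$ into~\eqref{eq: SINR with rho} and isolating $p_k$ (linear once $\rho^k_{21}$ is treated as a constant) gives exactly~\eqref{eq: pk1}, the root $p_{k_1}$, with $\rho^k_{21}$ evaluated at $p_{k_1}$ itself. The mirror-image argument, with $\lambda_{k,2}>0$ and the second inequality of~\eqref{eq: R constraints extent} active, yields~\eqref{eq: pk2}. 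Finally, if $\lambda_{k,1}=\lambda_{k,2}=0$, the stationarity equation $\partial L/\partial p_k=0$ collapses, after applying the chain rule $\partial/\partial p_k=\eta_{k,1}\,\partial/\partial p_{k,1}+\eta_{k,2}\,\partial/\partial p_{k,2}$ and simplifying the logarithmic derivatives, to~\eqref{eq: pk3}, which defines the remaining candidate $p_{k_3}$. These three cases being exhaustive, $\{p_{k_1},p_{k_2},p_{k_3}\}$ contains all extreme points.

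The algebraic steps (cross-multiplying $\gamma_{k,1}=2^{R_{k,1}}-1$ and collecting the $p_k$ terms, and the symmetric step for user~$2$) and the differentiation in the third case are routine. The main obstacle is the self-referential dependence $\rho^k_{ji}=\rho^k_{ji}(p_k)$: equations~\eqref{eq: pk1} and~\eqref{eq: pk2} are fixed-point equations rather than closed-form expressions, so one must argue that a feasible nonnegative root exists and note that $p_{k_1}$, $p_{k_2}$ are obtained numerically in practice. I would also remark that $p_k=0$ is excluded since it violates~\eqref{eq: R constraints extent} whenever $R_{k,i}>0$, and that the degenerate case $\lambda_{k,1}>0,\ \lambda_{k,2}>0$ merely forces $p_{k_1}=p_{k_2}$ and therefore does not enlarge the candidate set.
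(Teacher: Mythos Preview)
Your proposal is correct and follows essentially the same route as the paper: write down the KKT system for~\eqref{eq: overall problem2} via the Lagrangian~\eqref{long equation 1}, then split on the complementary-slackness status of $\lambda_{k,1}$ and $\lambda_{k,2}$ to obtain~\eqref{eq: pk1}, \eqref{eq: pk2}, and~\eqref{eq: pk3}. Your treatment is in fact more careful than the paper's (you make the substitution $p_{k,i}=\eta_{k,i}p_k$ explicit, note the fixed-point nature of \eqref{eq: pk1}--\eqref{eq: pk2}, and address the overlap case $\lambda_{k,1},\lambda_{k,2}>0$), but the underlying argument is identical.
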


\begin{proof}
    The proof of Lemma 1 is at \appendixautorefname.
\end{proof}

Using Lemma 1, we can determine the optimal inter-group power allocation vector $\boldsymbol{p}^*$ through Algorithm \ref{Algorithm power allocation 1}.
First, we iteratively update the value of $\mu$ and compute the value of three extreme points $p_{k_1}, p_{k_2}, p_{k_3}$ to find the optimal value $\mu_\text{opt}$, which allows the sum power of the users to satisfy the BS power constraint \eqref{eq: BS power}. Second, we substitute the value of $\mu_\text{opt}$ into Lemma 1 and compute the maximum allocated power among $p_{k_1}, p_{k_2}, p_{k_3}$ for each user group to obtain the optimal inter-group power allocation vector $\boldsymbol{p}^*$.

\begin{algorithm}[t]
    \caption{The optimal inter-group power allocation}
        \textbf{Input: }Channel coefficient $h_{k, i}$, minimum transmit requirement $R_{k, i}$, noise power $\sigma_{k,i}^2$ in each user group $k$, number of groups $K$ and the maximum power of BS $P_\text{max}$.
        
        \textbf{Output: }The optimal Lagrange multiplier $\mu_\text{opt}$ and the inter-group power allocation vector $\boldsymbol{p}^*$.
        
        \textbf{Function: } Iterate to find the optimal $\mu_\text{opt}$ and $\boldsymbol{p}^*$
        
        \begin{algorithmic}[1]
            \STATE \textbf{Initialization: }Set the initial guess for $\mu$ and the allowable tolerance $\epsilon$.
            \WHILE{$|\sum_{i=1}^{k} p_i - P_\text{max}| \ge \epsilon$}
                \FOR{each $k = 1:K$}
                    \STATE set $p_{k,1}=p_{k,2}$.
                    \STATE compute $p_{k_1}, p_{k_2}, p_{k_3}$ through \eqref{eq: pk1}, \eqref{eq: pk2} and \eqref{eq: pk3}.
                    \IF{$p_{k_1}, p_{k_2}, p_{k_3}$ are feasible}
                        \STATE set $p_k=\text{max}\{p_{k_1}, p_{k_2}, p_{k_3}\}$.
                    \ENDIF
                    \STATE $\sum_{i=1}^{k} p_i = \sum_{i=1}^{k-1} p_i + p_k$.
                \ENDFOR
                \STATE Update $\mu$.
            \ENDWHILE
        \end{algorithmic}
\label{Algorithm power allocation 1}
\end{algorithm}

\subsection{Intra-group Power Allocation Optimization in the SFMA System}
\label{sec: power in one group}
Given the optimized inter-group power allocation vector $\boldsymbol{p}^*$,
the optimization problem in \eqref{eq: subproblem 1} is simplified as
\begin{subequations}\label{eq: optimal eta}
  \begin{align}
    \underset{\boldsymbol{p}_1, \cdots, \boldsymbol{p}_K}{\max} \; & \sum_{k=1}^{K}   R_{k,12}\left( p_{k,1},p_{k,2} \right), \tag{22} \\
    \text{s.t.} \; & \sum_{i=1}^2{p_{k,i}}=p_k, \; \forall k \in \mathcal{K}. \label{eq: eta limit}
  \end{align}
\end{subequations}
From \eqref{eq: optimal eta}, we see that each $\boldsymbol{p}_k$ is only dependent on $r_{k,1}\left( p_{k,1}, p_{k,2}\right) + r_{k,2}\left(p_{k,1}, p_{k,2} \right)$. Hence, we can individually optimize the power $p_{k,1}$ and $p_{k,2}$ within each group, and the problem in \eqref{eq: optimal eta} can be divided into $K$ sub-optimization problems. Meanwhile, from the constraint 
\eqref{eq: eta limit}, we have $p_k=p_{k,1}+p_{k,2}$. Hence, we can use $p_k-p_{k,1}$ to represent $p_{k,2}$. Therefore, the problem in \eqref{eq: optimal eta} can be simplified as  
\begin{equation}\label{eq: J}
\underset{{p}_{k,1}}{\max} \left( r_{k,1}\left( p_{k,1}, p^*_k-p_{k,1}\right) + r_{k,2}\left(p_{k,1}, p^*_k-p_{k,1} \right) \right).
\end{equation}
Since the objective function in \eqref{eq: J} is a sum of two concave logarithmic functions with respect to \(p_{k,1}\). 
Thus, the overall objective function in (\ref{eq: J}) is concave. In consequence, standard optimization methods such as gradient descent \cite{ref50} can be used to find the optimal \(p^*_{k,1}\). Given $p^*_{k,1}$, we can directly obtain $p^*_{k,2}=p^*_k-p^*_{k,1}$. Algorithm \ref{Algorithm power allocation 2} summarizes the entire procedure to solve the problem in \eqref{eq: subproblem 1}. 

\begin{algorithm}[t]
    \caption{Proposed overall power allocation in the SFMA system}
        \textbf{Input: }Channel coefficient $h_{k, i}$, minimum transmit requirement $R_{k, i}$, noise power $\sigma_{k,i}^2$ in each user group $k$, number of groups $K$ and the maximum power of BS $P_\text{max}$.
        
        \textbf{Output: }Optimal inter-group power allocation vector $\boldsymbol{p}^*$, and optimal intra-group power allocation vector $\boldsymbol{p}_k$ for each group $k$.
        
        \begin{algorithmic}[1]
            \STATE Obtain the optimal inter-group power allocation vector $\boldsymbol{p}^*$ using Algorithm \ref{Algorithm power allocation 1}.
            \STATE Given $\boldsymbol{p}^*$, solve the optimization problem \eqref{eq: J} using the gradient descent method to find the optimal intra-group power allocation vector $\boldsymbol{p}_k$ for each group $k$.
        \end{algorithmic}
\label{Algorithm power allocation 2}
\end{algorithm}

\subsection{Convergence and Complexity Analysis}

Based on the above analysis, the overall algorithm is summarized in Algorithm \ref{Algorithm overall}. 
The convergence of the proposed algorithm is guaranteed at each stage. In the user pairing stage, the user pairing algorithm developed from the Gale–Shapley algorithm ensures a stable matching in a finite number of steps, where no two users would prefer each other over their current match, ensuring termination. In the inter-group power allocation stage, convergence to a feasible power allocation is achieved by iteratively solving the Karush–Kuhn–Tucker (KKT) conditions and determining the Lagrange multiplier $\mu$. The extreme points for power allocation are evaluated, and the water-filling algorithm ensures the total power constraint \eqref{eq: BS power} is satisfied, leading to convergence. Finally, in the intra-group power allocation stage, the optimal intra-group power allocation strategy is obtained by one-dimensional searching within finite iterations, which ensures convergence to the optimal solutions. Overall, the proposed algorithm is designed to guarantee convergence in each stage, making it robust and effective for user pairing and power allocation in SFMA systems.

The complexity of the proposed algorithm consists of three primary stages: user pairing, inter-group power allocation, and intra-group power allocation. The user pairing stage computes the preference matrix for all user pairs and generates preference lists based on the calculated preference values, resulting in a complexity of $O(M^2log(M))$ due to the sorting step. The user pairing algorithm is then used for pairing, with a worst-case complexity of $O(M^2)$, making the total complexity for the user pairing stage $O(M^2logM)$. In the inter-group power allocation stage, the algorithm solves a convex optimization problem by deriving extreme points from the KKT conditions and then selecting the maximum power allocation for each group. The complexity of calculating the Lagrange multiplier $\mu$ is $O(K)$ per iteration, and determining the three potential power allocations $p_{k_1}, p_{k_2}, p_{k_3}$ for each group also requires $O(K)$. The water-filling algorithm then selects the optimal power allocation while ensuring the total power constraint \eqref{eq: BS power}, resulting in overall complexity for this stage of $O(K) \approx O(M)$, given $M=2K$. In the intra-group power allocation stage, the algorithm optimizes the intra-group power allocation within each group and has a complexity of $O(TK)$, where $T$ is the number of iterations, leading to $O(TM)$ for $M=2K$. Combining these complexities, the total computational complexity of the proposed algorithm is $O(M^2logM+M+TM)$. In addition, when $M$ is extremely larger, the $M^2logM$ term dominates the total complexity and the total computational complexity is $M^2log(M)$.

\begin{algorithm}[t]
	\caption{User Pairing and Power Allocation in SFMA System}
        \textbf{Input}: $\mathcal{M}$, $R_\text{min}$, $\Delta$, $h_{k,m}$, $P_\text{max}$ and weight $\alpha$.
		
        \textbf{Output}: User pairing set $\mathcal{S}$ and the optimal power allocation matrix $\boldsymbol{P}^*$.
        
        \begin{algorithmic}[1]
    		\STATE Fix $\boldsymbol{P}$, and then implement Algorithm \ref{Algorithm power allocation 1} to form the pairing set $\mathcal{S}$.
    	    \STATE Apply $\mathcal{S}$ in problem \eqref{eq: subproblem 1}, and then implement Algorithm \ref{Algorithm power allocation 2} to find the optimal $\boldsymbol{P}^*$.
        \end{algorithmic}

\label{Algorithm overall}
\end{algorithm}

\section{Simulation Results and Analysis}
\label{sec: results}
For our simulations, we consider users to be uniformly distributed in the $500$ m $\times$ $500$ m area with a BS located at the center. The propagation model is $L(d)=37+30\text{log}(d)$, with the standard deviation of shadow fading $4$dB. AWGN noise power is $\sigma^2=-104\text{dBW}$. All users have a minimum rate requirement of $1$. The CIFAR-10 dataset is used to train both semantic encoders and decoders. The training dataset comprises $50,000$ $32\times32$ images from CIFAR-10, combined with random realizations of the channel under consideration. For testing, we use 10,000 images from the CIFAR-10 testing dataset. The model is trained with Adam optimizer over $300$ epochs, starting with a learning rate of $1\times10^{-3}$, reduced $1/10$ every $100$ epochs. The batch size is $128$, and the compression ratio for two frames is both $\gamma=1/3$. We also trained our communication model at various SNR levels using the SNU Frame Interpolation with Large Motion (SNU-FILM) dataset \cite{ref41}. A pre-trained TAIN model is leveraged to process the reconstructed frames and generate intermediate frames. The SNU-FILM dataset encompasses videos with diverse motion magnitudes, stratified into four settings: \textit{Easy}, \textit{Medium}, \textit{Hard}, and \textit{Extreme}, based on the temporal gap between frames. Each triplet in the dataset includes a start frame, an end frame, and a target frame. For example, in the \textit{Medium} dataset mode,  every fourth frame remains and the middle frame among the three dropped ones is employed as the target, corresponding to the scheme where the temporal gap is $\Delta=4$. 

To compare the performance of different MA schemes, we use three baselines: a) a fixed power allocation in the NOMA system called F-NOMA, b) an O-JSCC that each user in one group shares half of the bandwidth with equal power allocation to transmit the signal through JSCC networks separately, and c) an OFDMA scheme where the BS allocates the equal subchannel to different users in the system. To simplify the problem, all baselines adopt the same user pairing strategies introduced in \cite{ref26} where the BS will pair two users whose channel conditions are more distinctive.
We then compare transmission performance between our proposed JSCC network with a combined JSCC network with one joint source channel (JSC) encoder and two JSC decoders in one user group to show the effectiveness of our networks in recovering the images.
To compare the impact of the temporal gap, we also evaluate the model with the dataset in \textit{Medium} and \textit{Extreme} modes, corresponding to the schemes where the temporal gap is $\Delta=4$ and $\Delta=16$, respectively. Both frames were transmitted with a compression ratio $\gamma=1/3$. 

We have also further extended our work to the three-user case and run experiments to evaluate the performance of three different users. In the simulation, we assume that three users require different CIFAR-10 images simultaneously and the information they require is superimposed and sent to each user through different links.

\subsection{User Pairing and Power Allocation Performance}

\begin{figure}[t]
    \centerline{\includegraphics[width=0.45\textwidth]{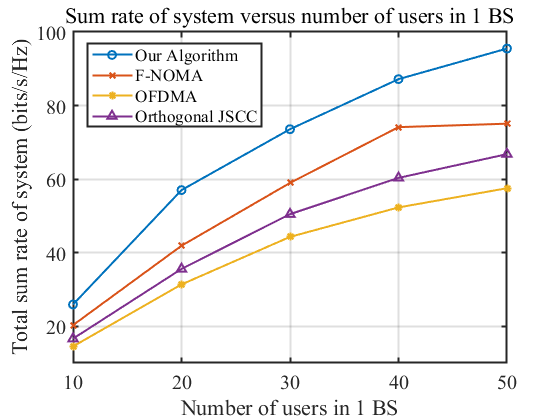}}
    \captionsetup{font={small}}
    \caption{Total sum rate of the system versus different number of users.}
    \label{fig: users vs rate}
\end{figure}

\begin{figure}[t]
    \centerline{\includegraphics[width=0.45\textwidth]{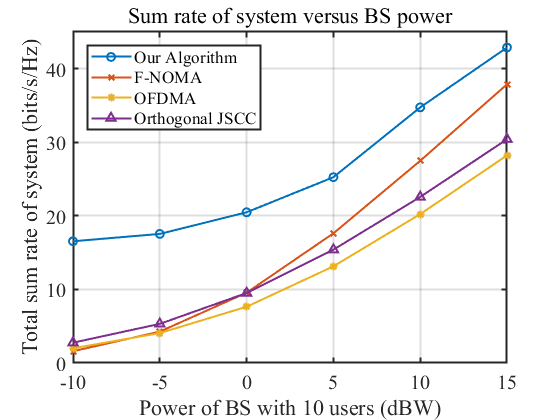}}
    \captionsetup{font={small}}
    \caption{Total sum rate of the system versus different power of BS with 10 users}
    \label{fig: power vs rate}
\end{figure}

In Fig. \ref{fig: users vs rate}, the performance of the total sum rate is evaluated with the number of users $M \in \left[10, 60\right]$. This figure demonstrates that the total sum rate increases as the number of users grows. However, as the number of users in the BS becomes larger, the growth rate of the sum rate slows down due to the limited power of the BS. From observing Fig. \ref{fig: users vs rate}, we conclude that our proposed user pairing and power allocation algorithm in the SFMA system outperforms the F-NOMA scheme. For instance, when the number of users is $30$, the total sum rate of our proposed algorithm is $24.8\%$ higher than that of the F-NOMA scheme, and $66.1\%$ higher than the OFDMA. This improvement is due to the fact that the F-NOMA scheme implements a fixed power allocation factor in each user group and does not consider semantic interference when allocating the power. 
Our algorithm also performs $45.8\%$ better than the O-JSCC scheme when there are $30$ users in the system. This is due to the fact that although the JSCC network will enhance the data rate in the O-JSCC scheme, each user in a group is equally allocated half of the bandwidth to avoid interference, resulting in a decreased total sum rate, as expected from Shannon's formula in calculating the sum rate. Our algorithm also achieves $66.1\%$ higher data rate than the OFDMA when the number of users is $30$.

In Fig. \ref{fig: power vs rate}, we investigate the sum rate versus the power of the BS with 10 users in the system. Our Algorithm exhibits a consistently high total sum rate across all power levels. This is due to the fact that we consider semantic interference when designing the user pairing and power allocation strategies. The F-NOMA scheme shows a lower total sum rate at lower power levels but demonstrates rapid improvement as the BS power increases. This is because the power allocation strategy in F-NOMA allows weak channel users to utilize channel resources at higher power levels better, leading to significant sum rate gains. The OFDMA scheme performs the lowest total sum rate across all power levels, and the O-JSCC performs slightly better. This is due to the fact that the users in the OFDMA scheme can only be assigned to one subchannel, which limits the achievable data rate. We should note that owing to the relatively low computational and hardware requirements, OFDMA remains widely employed in contemporary wireless networks, balancing system performance with the practicality of hardware implementation.

\subsection{Transmission Performance}
 This evaluation aims to demonstrate the capability of the network to transmit the superimposition of two images simultaneously. We compare this approach with the scheme that utilizes one JSC encoder and two JSC decoders for transmitting the superimposition of two images. The peak signal-to-noise ratio (PSNR) is used to evaluate the ratio of the maximum signal power to the noise power that distorts the signal, which is expressed as
\begin{equation}  
    \text{PSNR}= 10 \log_{10}\frac{\text{MAX}^2}{\text{MSE}}\left(\text{dB}\right),   
\end{equation}
where $\text{MSE}=d\left(\boldsymbol{x}, \boldsymbol{\Tilde{x}}\right)$ is the mean squared error between the raw images $\boldsymbol{x}$ and the reconstructed images $\boldsymbol{\Tilde{x}}$, and $\text{MAX}=255$ is the maximum possible value of the image pixels. 
Since the CIFAR-10 dataset consists of 24-bit depth RGB images, and there are 3 channels: R, G, and B, we have $\text{MAX}=2^8-1=255$.
Fig. \ref{fig: PSNR 1} demonstrates that the Attention-based JSCC scheme performs effectively in an AWGN channel, achieving high PSNR values for both users across a range of test SNRs. This is due to the fact that the attention mechanism in our network architecture strengthens the dependency between different channels at the pixel level by weighting them. Meanwhile, we notice that the performance of user 1 and user 2 is almost the same. This indicates the fairness and effectiveness of the proposed scheme in maintaining similar quality for multiple users, thus demonstrating the robustness of the scheme.

\begin{figure}[t]
    \centerline{\includegraphics[width=0.45\textwidth]{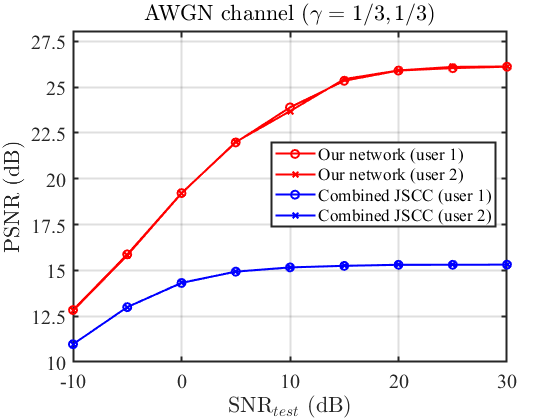}}
    \captionsetup{font={small}}
    \caption{PSNR of the two reconstructed images in our Attention-based JSCC and the combined JSCC network with one JSC encoder and JSC decoders.}
    \label{fig: PSNR 1}
\end{figure}

\subsection{Video Frame Interpolation Performance}

We evaluate the performance of the SFMA communication system for GAI-enabled video transmission, focusing on its effectiveness in video frame interpolation. This analysis underscores the significance of considering the temporal gap between simultaneously transmitted frames in the user pairing algorithm. We utilize the Multi-Scale Structural Similarity Index (MS-SSIM) metric \cite{ref45} to evaluate structural similarity and the Learned Perceptual Image Patch Similarity (LPIPS) metric \cite{ref46} to measure the perceptual similarities.

\begin{figure}[t]
    \centerline{\includegraphics[width=0.45\textwidth]{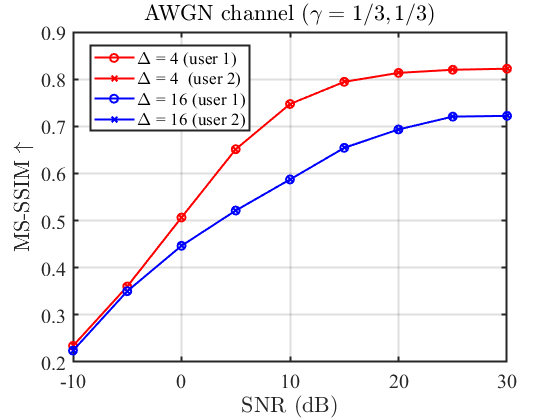}}
    \captionsetup{font={small}}
    \caption{Performance of the MS-SSIM versus different SNR levels for the scheme when $\Delta=4$ and $\Delta=16$.}
    \label{fig: MS-SSIM}
\end{figure}

MS-SSIM is a multi-scale extension of SSIM (Structural Similarity Index), and it captures multi-scale structural details, providing a comprehensive assessment of image quality. MS-SSIM is sensitive to brightness, contrast, and structure, making it suitable for evaluating image compression effects. MS-SSIM calculates the SSIM at different scales and then averages these values to obtain the final MS-SSIM value. The SSIM between the interpolated frame $\boldsymbol{\bar{x}}$ and the target frame $\boldsymbol{x}$ is calculated as
$\text{SSIM}\left( \boldsymbol{\bar{x}},\boldsymbol{x} \right) =\left[ l\left( \boldsymbol{\bar{x}},\boldsymbol{x} \right) \right] ^{\alpha}\cdot \left[ c\left( \boldsymbol{\bar{x}},\boldsymbol{x} \right) \right] ^{\beta}\cdot \left[ s\left( \boldsymbol{\bar{x}},\boldsymbol{x} \right) \right] ^{\gamma}$,
where $\alpha, \beta, \gamma$ are weighting parameters and usually set to 1, $l\left( \boldsymbol{\bar{x}},\boldsymbol{x} \right)$ denotes the brightness similarity, $c\left( \boldsymbol{\bar{x}},\boldsymbol{x} \right)$ denotes the contrast similarity, and $s\left( \boldsymbol{\bar{x}},\boldsymbol{x} \right)$ denotes the structure similarity. 
Therefore, we calculate the MS-SSIM by
\begin{equation}
    \text{MS-SSIM}\left( \boldsymbol{\bar{x}},\boldsymbol{x} \right) =\left[ l_S\left( \boldsymbol{\bar{x}},\boldsymbol{x} \right) \right] ^{\alpha _S}\cdot \prod_{j=1}^S{\left[ c_j\left( \boldsymbol{\bar{x}},\boldsymbol{x} \right) \right] ^{\beta _j}\cdot \left[ s_j\left( x,y \right) \right] ^{\gamma _j}},
\end{equation}
where $S$ is the scale number and $j$ denotes the $j$-th scale. 
However, MS-SSIM has limited capability in perceiving high-level semantic information, which is crucial for human visual system perception. Therefore, we also employed the LPIPS to measure perceptual similarity. The LPIPS is a deep learning-based metric that uses pre-trained convolutional neural networks (e.g., VGG) to extract image features and calculate differences in the feature space. LPIPS extracts the features of a frame by the pre-trained VGG network and computes the differences between the frames in the feature space. Let $\bar{f}^l, f^l$ represent the normalized feature map by layer $l$ of the VGG network with the input $\boldsymbol{\bar{x}}$ and $\boldsymbol{x}$. The LPIPS is calculated by
\begin{equation}
    \text{LPIPS}\left( \boldsymbol{\bar{x}},\boldsymbol{x} \right) =\sum_l{\frac{1}{\text{H}_l\text{W}_l}\sum_{h,w}{\lVert c_l\odot \left( \bar{f}_{hw}^{l}-f_{hw}^{l} \right) \rVert _{2}^{2}}},
\end{equation}
where $H_i, H_j$ are the height and width of the frame, the subscript $h$ and $w$ denote the $(h,w)$-th element of the feature map. The notation $\odot$ represents the scale operation.

In Fig. \ref{fig: MS-SSIM}, we show how the MS-SSIM of user 1 and user 2 changes as the SNR varies. From this figure, we see the similar performance of user 1 and user 2 when under the same temporal gap, which shows the fairness and robustness of the network. Fig. \ref{fig: MS-SSIM} shows that when SNR is small, the performance of $\Delta=4$ and $\Delta=16$ is very close. This is due to the fact that the structure of the reconstructed frames is greatly damaged by the noisy channel.  From Fig. \ref{fig: MS-SSIM}, we notice that the MS-SSIM values for both user 1 and user 2 increase consistently with higher $\text{SNR}_{test}$ levels, indicating improved structural similarity of the interpolated frames as SNR improves. From Fig. \ref{fig: MS-SSIM}, we can also observe that when $\Delta=4$, the MS-SSIM values plateau around 0.82, while when $\Delta=16$, the MS-SSIM values plateau around 0.72, demonstrating that the performance of $\Delta=4$ is $13.9\%$ better than that of $\Delta=16$. This is due to the fact that the scheme $\Delta=4$ considers a closer temporal gap than the scheme where $\Delta=16$ when grouping the different users, and the multi-scale structure of the generated interpolation frames will be more different if the temporal gap is higher. 

In Fig. \ref{fig: LPIPS}, we show how the LPIPS of user 1 and user 2 changes across various SNR levels. From this figure, we also see the similar performance of user 1 and user 2 when under the same temporal gap, which shows the fairness and robustness of the network. In this figure, we notice that when SNR increases, LPIPS values consistently decrease for both users, indicating enhanced perceptual similarity with an improved channel conditional. Fig. \ref{fig: LPIPS} shows that the overall LPIPS performance of the scheme $\Delta=4$ is better than that of the scheme $\Delta=16$. For example, the scheme $\Delta=16$ stabilizes at around 0.23. In contrast, the LPIPS of the scheme $\Delta=4$ stabilizes at approximately 0.05, which is only $21.7\%$ of the former. This result also shows the importance of considering the temporal gap when grouping the different users since it will greatly influence the visual perception of the generated interpolation frames.

\begin{figure}[t]
    \centerline{\includegraphics[width=0.45\textwidth]{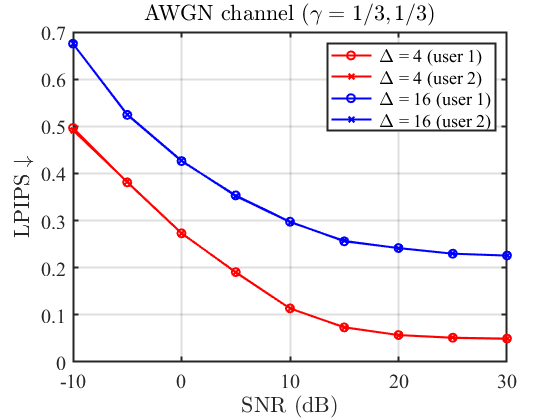}}
    \captionsetup{font={small}}
    \caption{Performance comparison of LPIPS versus different SNR levels for the scheme when $\Delta=4$ and $\Delta=16$.}
    \label{fig: LPIPS}
\end{figure}

The combined analysis of MS-SSIM and LPIPS metrics provides a comprehensive evaluation of the SFMA communication system for GAI-enabled video transmission. Both metrics exhibit a positive correlation with increasing SNR, reflecting improved video quality. Fig. \ref{fig: MS-SSIM} and Fig. \ref{fig: LPIPS} both show the alignment in trends between MS-SSIM and LPIPS across different SNRs confirming the reliability of the SFMA system in enhancing both structural and perceptual video quality. From Fig. \ref{fig: MS-SSIM} and Fig. \ref{fig: LPIPS}, the observed consistency across users further validates the robustness and fairness of the system in delivering high-quality video frames, ensuring a good user experience in semantic video transmission scenarios.

\subsection{Capability of Serving Three Users}

\begin{figure}[t]
    \centerline{\includegraphics[width=0.45\textwidth]{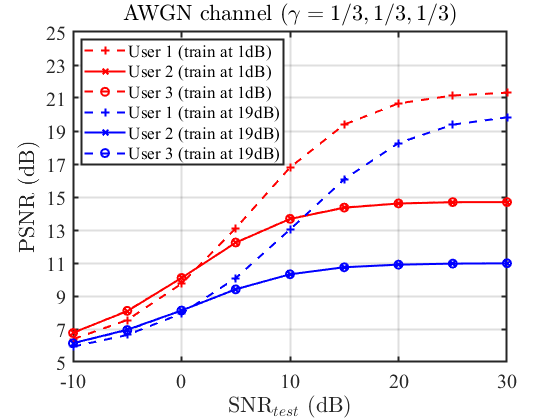}}
    \captionsetup{font={small}}
    \caption{PSNR performance of the scenario where simultaneously transmitted to 3 users.}
    \label{fig: 3 users}
\end{figure}

Fig. \ref{fig: 3 users} shows how the PSNR changes as the SNR of the testing dataset varies. This figure shows that the performance of the three-user case will decrease compared to the two-user case. This is due to the fact that the intra-group semantic interference is higher compared to the two-user case, which also indicates that when the number of simultaneously transmitted users increases, the system's transmission performance will decrease. From Fig. \ref{fig: 3 users}, we notice that the same user trained at 1dB performs better than that trained at 19dB. This indicates that the training process under worse channel conditions enables the receiver to have a higher ability to recover the damaged images. Fig. \ref{fig: 3 users} also indicates that while under the same training SNR, there will be random 2 users whose performances are the same and the remaining one has a better performance. This phenomenon shows that the robustness and fairness of the system will be decreased which will increase the complexity of designing the resource allocation strategy in the system. Therefore, we choose to pair every 2 users as a group to balance the overall performance of the system while ensuring fairness and robustness.  The proposed SFMA can also be applied to scenarios with more than two users per group. If there are more users in each group, the spectrum efficiency can be improved, while the decoding complexity can be increased since more interference is introduced. Thus, there is a trade-off between spectrum efficiency and decoding complexity when considering the number of users in each group. 

\begin{table*}[!t]
\centering
\begin{minipage}{1\textwidth}
\begin{subequations}\label{KKT condition whole}
    \begin{align}
    &\frac{\partial L}{\partial p_k} = \frac{\partial \left[ r_{k,1}\left( p_{k,1}, p_{k,2} \right)+r_{k,2}\left(p_{k,1}, p_{k,2} \right) \right]}{\ln 2 \cdot \partial p_k} + \lambda_{k,1} \left| h_{k,1} \right|^2 \left[ \left( 1 - 2^{R_{k,1}} \right) \left(\eta_{k,2} \rho_{21}^k \left( p_{k} \right) + \rho_{21}^k \left( p_{k}\right)' p_{k,2} \right) + \eta_{k,1} \right] \notag \\
    & \quad \quad+ \lambda_{k,2} \left| h_{k,2} \right|^2 \left[ \left( 1 - 2^{R_{k,2}} \right) \left( \eta_{k,1} \rho_{12}^k \left( p_{k} \right) + \rho_{12}^k \left( p_{k}\right)' p_{k,1} \right) + \eta_{k,2} \right] -  \mu = 0, \; \forall k\in \mathcal{K}, \label{eq: kkt 1}  \\
    & \lambda_{k,1} \left[ \left( p_{k,1} + \rho_{21}^k \left( p_{k} \right) p_{k,2} \right) \left| h_{k,1} \right|^2 + \sigma_{k,1}^2 - 2^{R_{k,1}} \left( \rho_{21}^k \left(p_{k} \right) p_{k,2}\left| h_{k,1} \right|^2 + \sigma_{k,1}^2 \right) \right] = 0, \; \forall k \in \mathcal{K}, \label{eq: kkt 2} \\ 
    & \lambda_{k,2} \left[ \left( p_{k,2} + \rho_{12}^k \left( p_{k}\right) p_{k,1} \right)\left| h_{k,2} \right|^2 + \sigma_{k,2}^2 - 2^{R_{k,2}} \left( \rho_{12}^k \left( p_{k}\right) p_{k,1}\left| h_{k,2} \right|^2 + \sigma_{k,2}^2 \right) \right] = 0, \; \forall k \in \mathcal{K}, \label{eq: kkt 3} \\
    & \mu \left( P_{\text{max}} - \sum_{k=1}^{K} p_k \right) = 0, \label{eq: kkt 4} \\
    &  \eqref{eq: power constraints 1}, \eqref{eq: power constraints 2}, \; \boldsymbol{\lambda}_1, \boldsymbol{\lambda}_2 \geqslant 0, \; \mu \geqslant 0.
\end{align}
\end{subequations}
\medskip
\hrule
\end{minipage}
\end{table*}

\section{Conclusion}
\label{sec: conclusion}
In this paper, we have introduced a novel multi-user multiple access semantic communication system named SFMA. To jointly optimize the semantic transmission rate and the temporal gap between the simultaneously transmitted frames, we formulate a user pairing problem. To solve this problem, we have used simulations to verify that the standard equation of signal-to-interference-plus-noise ratio cannot capture the performance of our designed multi-user multiple access semantic communication system since it cannot capture data meaning transmission performance. To address this problem, we have introduced a weight parameter into the standard SINR equation to capture the performance of our designed semantic system accurately. We then develop a user pairing algorithm to pair the two users having the highest preference value which is a weighted combination of the semantic transmission rate and the temporal gap. To further optimize the semantic transmission rates, we have formulated an optimization problem whose goal is to maximize the sum rates of all users while meeting each user's minimum rate requirement. To solve the formulated problem, we have proposed a solution that first finds the optimal power allocated to each group, and then optimizes the power allocated to each user within a group. The designed two-step solution can significantly simplify the optimization process and effectively reduce the complexity. Simulation results have shown that the proposed method yielded significant improvements in terms of positioning accuracy compared to baselines.

While the proposed SFMA system has shown promising results, several avenues remain open for further exploration and enhancement. Future research could focus on developing more sophisticated models for semantic interference that consider dynamic and adaptive features of semantic spaces, potentially leading to even more efficient resource allocation strategies. In addition, as the number of users increases, the complexity of user pairing and power allocation algorithms may become a bottleneck. Investigating scalable and computationally efficient algorithms will be essential for large-scale deployment. 

\appendix
\label{sec: appendices}

To prove Lemma 1, we first present the KKT conditions of the problem (\ref{eq: subproblem 1}) in \eqref{KKT condition whole}.

According to \eqref{KKT condition whole}, every possible value of the total power $p_k$ allocated to group $k$ should satisfy a combination value of Lagrange multipliers. Therefore, each $p_k$ can be considered as the function of $\lambda _{k,1}, \lambda _{k,2}$ and $\mu$. We further derive the three extreme points of $p_k$ in the following three situations. 
\begin{enumerate}
    \item When $\lambda _{k,1}\ne 0,\lambda _{k,2}=0$, the corresponding solution derived from \eqref{eq: kkt 2} is denoted as $p_{k_1}$, as shown in \eqref{eq: pk1}.
    \item When $\lambda _{k,1}=0,\lambda _{k,2}\ne 0$, the corresponding solution derived from \eqref{eq: kkt 3} is denoted as $p_{k_2}$, as shown in \eqref{eq: pk2}.
    \item When $\lambda _{k,1}= 0,\lambda _{k,2}=0$, the corresponding solution derived from \eqref{eq: kkt 1} is denoted as $p_{k_3}$ which is the solution of \eqref{eq: pk3}.
\end{enumerate}

\bibliographystyle{ieeetr}
\bibliography{ref}

\end{document}